\documentclass[12pt]{article}

\usepackage[T1]{fontenc}
\usepackage{kpfonts}
\usepackage{eurosym}
\usepackage{graphicx}
\usepackage{enumerate}
\usepackage{amsmath}
\usepackage{amsfonts}
\usepackage{amssymb}
\usepackage{amsfonts}
\usepackage{amssymb,amsmath,amsthm,epsfig,graphicx,natbib,subfigure,hyperref}
\usepackage{geometry}
\usepackage{caption}
\usepackage{color}
\usepackage{enumerate}
\usepackage{setspace}
\usepackage[textwidth=30mm]{todonotes}
\usepackage{sgame, tikz}
\usepackage{multirow,array}
\usepackage{rotating}
\usepackage{graphicx}
\usepackage{subfigure}
\usepackage{float}
\usepackage{booktabs}
\usepackage{makecell}
\usepackage{xurl}
\usepackage{threeparttable}
\usepackage{subcaption}
\usepackage{hyperref}
\usepackage{pgfplots}

\hypersetup{colorlinks=true,linkcolor=blue,citecolor=blue,urlcolor=blue}
\allowdisplaybreaks

\bibpunct[: ]{(}{)}{;}{a}{}{,}

\newtheorem{theorem}{Theorem}

\newcommand {\be}{\begin{equation}}
\newcommand {\ee}{\end{equation}}

\title{A Note on Market Segmentation and Bertrand Competition\thanks{We thank Calzolari Giacomo for addressing this question.}}

\author{Zhang Xu\thanks{School of Economics, Renmin University of China, China. Email:
xuzhang@ruc.edu.cn } \and Mingsheng Zhang\thanks{School of Economics, Renmin University of China, China. Email:
mingsheng\_zhang@yeah.net } \and Wei Zhao\thanks{
School of Economics and Management, Tsinghua University, China. Email:
wei.zhao@outlook.fr }}

\usepackage{soul}

\date{\today\\[1cm] Latest Version: \href{https://365.kdocs.cn/l/cpHhDYaCcF4U}{Click Here}}
\begin{document}
\maketitle

\begin{abstract}
In this note, we show that equilibrium profit is zero in Bertrand competition with a finite number of firms and consumers whose willingness to pay are bounded, under any market segmentation profile.

\noindent \emph{Keywords:} {Personalized Pricing, Bertrand Competition, Market Segmentation}

\end{abstract}

\newpage
{
  \hypersetup{linkcolor=black}
  \tableofcontents
}

\newpage

\clearpage
\pagenumbering{arabic}
\setcounter{page}{1}

\newpage

\section{Introduction}

    The technology development in data collection, storage and processing makes price discrimination a relevant question in industrial organization. As price discrimination has been extensively studied in monopolistic setting,\footnote{One of the seminal papers is \cite{bergemann2015limits}} a natural question then arises, i.e. how price discrimination affects competition among firms. 

    Though this question has been partially answered in competition among firms with differentiated goods,\footnote{See \cite{rhodes2024aer} and \cite{Elliott2026}.} the benchmark scenario should be Bertrand competition with homogeneous goods. Can market segmentation alleviate the level of competition in Bertrand competition? This note offers an affirmative negative answer, when the number of firms is finite and consumers' willingness to pay is bounded.

\section{Model and Main Result}

There are finite numbers of firms $i\in N={1,\cdot\cdot\cdot,n}$ and a unit of consumers whose willingness to pay (WTP) $\omega \in \Omega:=[0,\overline{\omega}]$ with measure $\mu$. A market segmentation profile is given by $(\pi,\times_{i\in N} M_i)$, where $M_i=\mathbb{R}$ and a measurable function $\pi:\Omega\rightarrow \Delta(\mathbf{M}:=\times_{i\in N} M_i)$. Each firm $i$'s mixed strategy is some measurable function $\sigma_i: M_i\rightarrow \Delta(\mathbb{R}^+)$. Given price profile $\mathbf{p}:=(p_i)_{i\in N}$ and consumer with WTP $\omega$, firm $i$'s profit is given by $q_i(\mathbf{p},\omega)\ge 0$ satisfying two additional properties
\begin{equation}\label{eq:property-of-q}
    \begin{cases}
q_{i}(\mathbf{p},\omega)=0 & \text{if }p_{i}>\min\{\min_{j\in N}p_{j},\omega\}\\
\sum_{j\in N}q_{i}(\mathbf{p},\omega)=\min_{j\in N}p_{j} & \text{if }\min_{j\in N}p_{j}\le\omega
\end{cases}
\end{equation}
Given a strategy profile $\boldsymbol{\sigma}:=(\sigma_i,\sigma_{-i})$, firm $i$'s profit is given by 
\[
Q_{i}(\sigma_{i},\sigma_{-i})=\int_{\omega\in\Omega,\mathbf{m}\in M,\mathbf{p}\in[0,\infty)^{n}}q_{i}(p_{i},\mathbf{p}_{-i},\omega)\Pi_{i\in N}d\sigma_{i}(p_{i}|m_{i})d\pi(\mathbf{m}|\omega)d\mu(\omega)
\]
A strategy profile $\boldsymbol{\sigma}$ is Nash Equilibrium if 
\[
Q_{i}(\sigma_{i},\sigma_{-i})\ge Q_{i}(\hat{\sigma}_{i},\sigma_{-i}),\forall\hat{\sigma}_{i},\forall i\in N
\]

\begin{theorem}
    Among all Nash Equilibrium $\boldsymbol{\sigma}$, $Q_i \equiv 0$ for all $i\in N$.
\end{theorem}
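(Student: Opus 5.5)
The plan is to argue by contradiction, looking at where profitable prices sit. Suppose $\boldsymbol{\sigma}$ is a Nash Equilibrium with $\Pi:=\sum_{i\in N}Q_i>0$. First I would rewrite each firm's problem signal by signal. Conditional on $m_i$, firm $i$ faces a joint law of $(\omega,\mathbf{p}_{-i})$ induced by $\mu$, $\pi$ and $\sigma_{-i}$. Its expected payoff from a pure price $p$ is
\[
u_i(p\mid m_i)=\mathbb{E}\big[q_i(p,\mathbf{p}_{-i},\omega)\mid m_i\big].
\]
Equilibrium then means that $\sigma_i(\cdot\mid m_i)$ is supported, for almost every $m_i$, on $\arg\max_p u_i(p\mid m_i)$, and $Q_i=\mathbb{E}[\max_p u_i(p\mid m_i)]$. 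By (\ref{eq:property-of-q}), aggregate profit is
\[
\Pi=\mathbb{E}\big[\min_j p_j\,\mathbf{1}\{\min_j p_j\le \omega\}\big],
\]
so all profit comes from transactions at the realized market price $P:=\min_j p_j\le\omega\le\overline{\omega}$.

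Next I would define the \emph{top transaction price}
\[
x:=\sup\{t:\ \Pr(P\ge t,\ P\le\omega)>0\}.
\]
If $\Pi>0$ then $0<x\le\overline{\omega}$. Moreover, for every $\delta>0$ the event $E_\delta=\{P\in(x-\delta,x],\,P\le\omega\}$ has positive probability. Hence some firm $i$ sells on $E_\delta$ with positive probability, at prices $p_i\in(x-\delta,x]$. On $E_\delta$ all rivals price at least $x-\delta$. Rivals pricing strictly above $x$ make no sales, by the definition of $x$. The key step is to show that firm $i$, conditional on the signals $m_i$ at which it charges such prices, has a profitable downward deviation. The relevant event is the one where rival $j$ also prices in $(x-\delta,x]$ and sells. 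Symmetrically, on a positive-probability set of its own signals, firm $j$ should want to undercut.

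The cleanest device I would try is a ``log-uniform'' deviation. On signal $m_i$, firm $i$ draws a price $t$ with density proportional to $1/t$ on $[a,\overline{\omega}]$. Set $Y_i:=\min\{\min_{j\ne i}p_j,\omega\}$. Since firm $i$ earns $t$ whenever $t<Y_i$, integrating gives the lower bound
\[
Q_i\ \ge\ \frac{\mathbb{E}\big[(Y_i-a)^+\big]}{\log(\overline{\omega}/a)}.
\]
Summing over $i$ and comparing with the aggregate profit of the other firms, which is at most $\mathbb{E}[Y_i\,\mathbf{1}\{\text{some } j\ne i \text{ sells}\}]$, yields a chain of inequalities that I would then localise near the top price $x$. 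Restricting the deviation to the window $(x-\delta,x]$ makes the logarithmic loss $\log(x/(x-\delta))\approx\delta/x$. Two things then hold near the top. Every unit of profit earned by a rival in that window could be captured by $i$ at a loss of order $\delta$. Yet $i$ cannot be earning that profit itself, since on $E_\delta$ the seller is one firm while the other $n-1\ge1$ firms (the case $n\ge2$ is needed here) are pricing at most marginally higher. Letting $\delta\to0$ should show that the mass of profit at the top must vanish. That contradicts the definition of $x$, hence forces $\Pi=0$, and since each $Q_i\ge0$ (as $q_i\ge0$) it gives $Q_i=0$ for every $i$.

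The main obstacle is exactly this top-of-support step. Firm $i$ does not observe $\mathbf{p}_{-i}$, and its signal $m_i$ may be correlated with both $\omega$ and the rivals' signals. The naive ``undercut the rival's price by $\varepsilon$'' deviation is therefore not measurable with respect to $m_i$. This is why I would use signal-measurable, price-shading mixtures rather than pointwise undercutting. The delicate part is handling ties and atoms at $x$, where the tie-breaking in $q_i$ is arbitrary. There I expect to need the standard argument that an atom of transactions at $x$ shared by two firms is broken by an $\varepsilon$-undercut on a positive-measure set of signals. Absence of atoms then reduces the argument to the continuous window estimate above.
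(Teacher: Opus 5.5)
\textbf{There is a genuine gap at the key step.} Your overall plan is the right one: argue by contradiction at the top transaction price $x$ (the paper's $\overline{a}$), using a signal-measurable deviation. But the log-uniform device does not capture rival profit ``at a loss of order $\delta$.''

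\textbf{What the window mixture actually earns.} Set $a=x-\delta$ and take an event with $Y_i=y\in(a,x]$. Mixing with density proportional to $1/t$ on $[a,x]$ earns $\frac{y-a}{\log(x/a)}\approx x\,\frac{y-a}{x-a}$. That is only the fraction of the sale given by the position of $y$ inside the window. This fraction is scale-invariant, so letting $\delta\to0$ changes nothing. On a transaction at price $P$ you only know $Y_i\ge P$, so the estimate cannot force a contradiction.

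\textbf{A concrete failure.} Take $n=2$ and let all consumers have $\omega=x=\overline{\omega}$. A public signal recommends a common price $p$, uniform on $[0,x]$, and ties are split equally. Suppose a firm applies your window mixture whenever the recommended price is at least $a$. On that event it earns $\frac{(x-a)^2}{2x\log(x/a)}$, versus $\frac{x^2-a^2}{4x}$ from following the recommendation. The former is smaller for every $a\in(0,x)$, because $\log(x/a)>\frac{2(x-a)}{x+a}$. Your unrestricted mixture on $[a,\overline{\omega}]$ is likewise never profitable here, since $\log(x/a)\ge 2(1-a/x)^2$. Yet this profile is not an equilibrium: capping the price at any $a>x/2$ is strictly profitable.

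\textbf{Further problems.}
\begin{enumerate}
\item You choose the deviator as a firm that sells near the top. What is needed is a firm with a \emph{small} share of top profits.
\item On $E_\delta$, rivals need not price ``marginally higher'' than $x$; they may price far above $x$ and simply not sell.
\item Atoms of transactions can sit anywhere in the window, and tie-breaking may award them all to one firm. Your ``shared atom at $x$'' argument does not cover these cases.
\end{enumerate}

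\textbf{The missing idea.} Use a deterministic cap rather than a mixture, and rely on slack of a factor $n$ rather than a limit $\delta\to0$. This also removes your measurability concern.
\begin{enumerate}
\item Choose $\hat{a}\in(\overline{a}/n,\overline{a})$ that is not an atom of the transaction-price measure $G$. This is possible because $G$ has at most countably many atoms, and it gives $G([\hat{a},\overline{a}])>0$.
\item Let $i$ be a firm whose profit from transactions at prices $\ge\hat{a}$ is smallest. By pigeonhole it is at most $\frac{1}{n}\overline{a}\,G([\hat{a},\overline{a}])$.
\item Signal by signal, firm $i$ moves all mass of $\sigma_i(\cdot\mid m_i)$ on $[\hat{a},\infty)$ to the point $\hat{a}$.
\item Whenever any transaction occurred at a price $P\ge\hat{a}$, every firm, including $i$, was pricing at least $P$, and $\omega\ge P$. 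So after the cap, firm $i$ sells at $\hat{a}$.
\item A tie at $\hat{a}$ would require the original minimum price to equal $\hat{a}$ with $\omega\ge\hat{a}$. This has probability $G(\{\hat{a}\})=0$.
\item Firm $i$'s profit on all other events is unchanged.
\end{enumerate}
The deviation therefore gains at least $(\hat{a}-\overline{a}/n)\,G([\hat{a},\overline{a}])>0$, which is the contradiction. This is exactly the paper's argument.
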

\begin{proof}
    Fix a Nash Equilibrium $\boldsymbol{\sigma}$ and some $\mathbf{m}\in \mathbf{M}$, we have a probability measure on price profile $\mathbf{p}$, which then induces a probability measure on minimal price $\underline{p}$ as 
    \[
    d\sigma_{\underline{p}}(p|\mathbf{m})=\int_{\mathbf{p}\in[p,\infty)^{n}\backslash(p,\infty)^{n}}\Pi_{i\in N}d\sigma_{i}(p_{i}|m_{i}).
    \]
The Nash Equilibrium $\boldsymbol{\sigma}$ also induces a probability measure $G$ on transaction price 
\[
dG(p)=\int_{\omega\in[p,\overline{\omega}),\mathbf{m}\in\mathbf{M}}d\sigma_{\underline{p}}(p|\mathbf{m})d\pi(\mathbf{m}|\omega)d\mu(\omega).
\]
Denote $\overline{a} := \text{ess sup} G$. By definition, we have $\overline{a}\in[0,\overline{\omega}]$. Our objective is to prove $\overline{a} =0$.

Suppose, on the contrary, $\overline{a} >0$, then there always exists some $\hat{a}\in (\overline{a}/n,\overline{a}]$ such that $G$ is atomless in $\hat{a}$ since a sigma-finite measure has at most countable points with atom. The fact that $G$ is atomless at $\hat{a}$ induces that 
\begin{align*}
0=dG(\hat{a}) & =\int_{\omega\in[\hat{a},\bar{\omega}),\mathbf{m}\in\mathbf{M}}d\sigma_{\underline{p}}(\hat{a}|\mathbf{m})d\pi(\mathbf{m}|\omega)d\mu(\omega)\\
 & =\int_{\omega\in[\hat{a},\bar{\omega}),\mathbf{m}\in\mathbf{M}}\int_{\mathbf{p}\in[\hat{a},\infty)^{n}\backslash(\hat{a},\infty)^{n}}\Pi_{i\in N}d\sigma_{i}(p_{i}|m_{i})d\pi(\mathbf{m}|\omega)d\mu(\omega)\\
 & =\int_{\omega\in[\hat{a},\bar{\omega}),\mathbf{m}\in\mathbf{M}}\int_{\mathbf{p}\in[\hat{a},\infty)^{n}}\Pi_{i\in N}d\sigma_{i}(p_{i}|m_{i})d\pi d\mu-\int_{\omega\in[\hat{a},\bar{\omega}),\mathbf{m}\in\mathbf{M}}\int_{\mathbf{p}\in(\hat{a},\infty)^{n}}\Pi_{i\in N}d\sigma_{i}(p_{i}|m_{i})d\pi d\mu
\end{align*}
In other words,
\begin{equation}\label{eq:atomless}
\int_{\omega\in[\hat{a},\bar{\omega}),\mathbf{m}\in\mathbf{M}}\int_{\mathbf{p}\in[\hat{a},\infty)^{n}}\Pi_{i\in N}d\sigma_{i}(p_{i}|m_{i})d\pi d\mu=\int_{\omega\in[\hat{a},\bar{\omega}),\mathbf{m}\in\mathbf{M}}\int_{\mathbf{p}\in(\hat{a},\infty)^{n}}\Pi_{i\in N}d\sigma_{i}(p_{i}|m_{i})d\pi d\mu. 
\end{equation}
The total profits given that the transaction price $p\in [\hat{a},\bar{a}]$ satisfies the following inequality
\[
\int_{p\in[\hat{a},\overline{a}],\omega\in[p,\overline{\omega}],\mathbf{m}\in\mathbf{M}}pd\sigma_{\underline{p}}(p|\mathbf{m})d\pi(\mathbf{m}|\omega)d\mu(\omega)=\int_{p\in[\hat{a},\overline{a}]}pdG(p)\le\bar{a}G([\hat{a},\overline{a}]).
\]
Therefore at least one firm whose profit is at most $1/n\cdot  \bar{a}G([\hat{a},\overline{a}])$. Without loss of generality, denote one of this firm as firm $i$. Consider a new strategy $\hat{\sigma}_i$ of firm $i$ as 
\[
\hat{\sigma}_{i}(E|m_{i}):=\sigma_{i}(E\cap[0,\hat{a}))|m_{i})+\sigma_{i}([\hat{a},\infty)|m_{i})\cdot\text{dirac}_{\hat{a}}(E),\forall \text{ Borel set } E. 
\]
$i$'s profit difference can be given as 
\begin{align*}
Q_{i}(\hat{\sigma}_{i},\boldsymbol{\sigma}_{-i})-Q_{i}(\sigma_{i},\boldsymbol{\sigma}_{-i})= & \int_{\omega\in[0,\overline{\omega}],\mathbf{m}\in\mathbf{M},p_{i}\in[\hat{a},\infty),\mathbf{p}_{-i}\in[0,\infty)^{n-1}}{q_{i}(\hat{a},\mathbf{p}_{-i},\omega)}\Pi_{i\in N}d\sigma_{i}(p_{i}|m_{i})d\pi d\mu\\
 & -\int_{\omega\in[0,\overline{\omega}],\mathbf{m}\in\mathbf{M},p_{i}\in[\hat{a},\infty),\mathbf{p}_{-i}\in[0,\infty)^{n-1}}{q_{i}(p_{i},\mathbf{p}_{-i},\omega)}\Pi_{i\in N}d\sigma_{i}(p_{i}|m_{i})d\pi d\mu
\end{align*}
The first term satisfies 
\begin{align*}
 & \int_{\omega\in[0,\overline{\omega}],\mathbf{m}\in\mathbf{M},p_{i}\in[\hat{a},\infty),\mathbf{p}_{-i}\in[0,\infty)^{n-1}}{q_{i}(\hat{a},\mathbf{p}_{-i},\omega)\Pi_{i\in N}d\sigma_{i}(p_{i}|m_{i})d\pi d\mu}\\
= & \int_{\omega\in[\hat{a},\overline{\omega}],\mathbf{m}\in\mathbf{M},p_{i}\in[\hat{a},\infty),\mathbf{p}_{-i}\in[\hat{a},\infty)^{n-1}}{q_{i}(\hat{a},\mathbf{p}_{-i},\omega)\Pi_{i\in N}d\sigma_{i}(p_{i}|m_{i})d\pi d\mu} \ \text{(Property \ref{eq:property-of-q} of q)}\\
= & \int_{\omega\in[\hat{a},\overline{\omega}],\mathbf{m}\in\mathbf{M},\mathbf{p}\in(\hat{a},\infty)^{n-1}}{q_{i}(\hat{a},\mathbf{p}_{-i},\omega)\Pi_{i\in N}d\sigma_{i}(p_{i}|m_{i})d\pi d\mu}\ \text{(Equation \ref{eq:atomless})}\\
= & \hat{a}\int_{\omega\in[\hat{a},\overline{\omega}],\mathbf{m}\in\mathbf{M},\mathbf{p}\in(\hat{a},\infty)^{n-1}}{\Pi_{i\in N}d\sigma_{i}(p_{i}|m_{i})d\pi d\mu}\ \text{(Property \ref{eq:property-of-q} of q)}\\
= & \hat{a}G((\hat{a},\infty))=\hat{a}G([\hat{a},\bar{a}])
\end{align*}
The second term is exactly firm $i$'s profit given that the transaction price $p\in [\hat{a},\bar{a}]$. To see this,
\begin{align*}
 & \sum_{i\in N}{\int_{\omega\in[0,\overline{\omega}],\mathbf{m}\in\mathbf{M},p_{i}\in[\hat{a},\infty),\mathbf{p}_{-i}\in[0,\infty)^{n-1}}{q_{i}(p_{i},\mathbf{p}_{-i},\omega)\Pi_{i\in N}d\sigma_{i}(p_{i}|m_{i})d\pi d\mu}}\\
= & \sum_{i\in N}{\int_{p\in[\hat{a},\infty),\omega\in[p,\overline{\omega}],\mathbf{m}\in\mathbf{M},\mathbf{p}_{-i}\in[p,\infty)^{n-1}}{q_{i}(p,\mathbf{p}_{-i},\omega)d\sigma_{i}(p|m_{i})\Pi_{j\ne i}d\sigma_{j}(p_{j}|m_{j})d\pi d\mu}}\ \text{(Property \ref{eq:property-of-q} of q)}\\
= & \int_{p\in[\hat{a},\infty),\omega\in[p,\overline{\omega}],\mathbf{m}\in\mathbf{M}}{\sum_{i\in N}\int_{(p_{i},\mathbf{p}_{-i})\in\{p\}\times[p,\infty)^{n-1}}{q_{i}(p,\mathbf{p}_{-i},\omega)\Pi_{i\in N}d\sigma_{i}(p_{i}|m_{i})}d\pi d\mu}\\
= & \int_{p\in[\hat{a},\infty),\omega\in[p,\overline{\omega}],\mathbf{m}\in\mathbf{M}}{\sum_{i\in N}\int_{(p_{i},\mathbf{p}_{-i})\in[p,\infty)^{n}\backslash(p,\infty)^{n}}{q_{i}(p,\mathbf{p}_{-i},\omega)\Pi_{i\in N}d\sigma_{i}(p_{i}|m_{i})}d\pi d\mu}\ \text{(Property \ref{eq:property-of-q} of q)}\\
= & \int_{p\in[\hat{a},\infty),\omega\in[p,\overline{\omega}],\mathbf{m}\in\mathbf{M}}{pd\sigma_{\underline{p}}(p|\mathbf{m})d\pi d\mu}=\int_{p\in[\hat{a},\infty)}pdG(p)
\end{align*}
The fact that $\hat{a}>\overline{a}/n$ and $G([\hat{a},\overline{a}])>0$ leads to a strictly profitable deviation.

\end{proof}

\section{Discussion and Future Work}
This note simplifies and generalizes the results in \cite{Jann2015}. They consider correlated equilibrium in Bertrand competition, while our results can be interpreted as Bayesian Correlated Equilibrium (c.f. \citealt{bergemann2016bayes}). 

In the future, we want to study if consumers' WTP range $\Omega \in [0,\infty)$ instead of being bounded, can we design a market segmentation profile to induce Folk Theorem as \cite{Baye1999}.\footnote{The Folk Theorem in \cite{Baye1999} requires that the monopoly price being infinity. }

\bibliographystyle{chicago}
\bibliography{mkt-seg-bertrand}

\end{document}